\theoremstyle{plain}
\newtheorem{theorem}{Theorem}
\newtheorem{corollary}[theorem]{Corollary}
\newtheorem{definition}[theorem]{Definition}
\newcommand{\abs}[1]{\left\lvert#1\right\rvert}
\DeclareMathOperator{\Sim}{sim}
\DeclareMathOperator{\Dom}{dom}
\DeclareMathOperator*{\Limsup}{\varlimsup}
\newcommand{\N}{\mathbb{N}}
\newcommand{\Q}{\mathbb{Q}}
\newcommand{\R}{\mathbb{R}}
\newcommand{\X}{\{0,1\}^*}
\newcommand{\XI}{\{0,1\}^\infty}
\newcommand{\K}{K}
\begin{document}

\title{The Tsallis entropy and the Shannon entropy of\\
a universal probability}

%
\author{\authorblockN{Kohtaro Tadaki}
\authorblockA{Research and Development Initiative, Chuo University\\
1-13-27 Kasuga, Bunkyo-ku, Tokyo 112-8551, Japan\\
Email: tadaki@kc.chuo-u.ac.jp}}

\maketitle

\begin{abstract}
We study the properties of Tsallis entropy and Shannon entropy
from the point of view of algorithmic randomness.
In algorithmic information theory,
there are two equivalent ways to define the program-size complexity $\K(s)$ of
a given finite binary string $s$.
In the standard way, $\K(s)$ is defined as the length of
the shortest input string
for the universal self-delimiting Turing machine to output $s$.
In the other way,
the so-called universal probability $m$ is introduced first,
and then $\K(s)$ is defined as $-\log_2 m(s)$
without reference to the concept of program-size.
In this paper,
we investigate
the properties of
the Shannon entropy, the power sum, and the Tsallis entropy of
a universal probability
by means of the notion of program-size complexity.
We determine the convergence or divergence of each of
these three quantities,
and
evaluate
its degree of randomness
if it converges.
\end{abstract}

\section{Introduction}

Algorithmic information theory is a framework to apply
information-theoretic and probabilistic ideas to recursive function theory.
One of the primary concepts of algorithmic information theory
is the \textit{program-size complexity}
(or \textit{Kolmogorov complexity}) $\K(s)$ of a finite binary string $s$,
which is defined as the length of the shortest binary
program
for the universal self-delimiting Turing machine $U$ to output $s$.
By the definition,
$\K(s)$ can be thought of as the information content of
the individual finite binary string $s$.
In fact,
algorithmic information theory has precisely the formal properties of
classical information theory (see Chaitin \cite{C75}).
The concept of program-size complexity plays a crucial role in
characterizing the randomness of a finite or infinite binary string.

The program-size complexity $\K(s)$ is originally defined
using the concept of program-size, as stated above.
However,
it is possible to define $\K(s)$ without referring to such a concept,
i.e.,
we first introduce a \textit{universal probability} $m$,
and then define $\K(s)$ as $-\log_2 m(s)$.

In this paper,
we investigate
the properties of
the Shannon entropy, the power sum, and the Tsallis entropy of
a universal probability,
from the point of view of algorithmic randomness,
by means of the notion of program-size complexity.
In particular,
we show
the following:
(i) The Shannon entropy of any universal probability diverges to infinity.
(ii) If $q$ is a
computable
real number with $q\ge 1$,
then the power sum $\sum_s m(s)^q$ of any universal probability $m$
has the \textit{degree of randomness} at least $1/q$.
Here the notion of degree of randomness is
a stronger notion than compression rate,
and is defined using the program-size complexity \cite{T99,T02}.
(iii) If $0<q<1$,
then the power sum $\sum_s m(s)^q$
diverges to infinity.
(iv) In the case where $q$ is a computable real number with $q>1$,
the Tsallis entropy $S_q(m)$ of a universal probability $m$
can have any
computable
degree of randomness.
(v) If $0<q<1$,
then the Tsallis entropy $S_q(m)$ diverges to infinity.

\section{Preliminaries}
\label{preliminaries}

We start with some notation about numbers and strings
which will be used in this paper.

$\N=\left\{0,1,2,3,\dotsc\right\}$
is the set of natural numbers,
and $\N^+$ is the set of positive integers.
$\Q$ is the set of rational numbers,
and $\R$ is the set of real numbers.
$\X =
\left\{
  \lambda,0,1,00,01,10,11,000,001,010,\dotsc
\right\}$
is the set of finite binary strings
where $\lambda$ denotes the \textit{empty string},
and $\X$ is ordered as indicated.
We identify any string in $\X$ with a natural number in this order,
i.e.,
we consider $\varphi\colon \X\to\N$ such that $\varphi(s)=1s-1$
where the concatenation $1s$ of strings $1$ and $s$ is regarded
as a dyadic integer,
and then we identify $s$ with $\varphi(s)$.
For any $s \in \X$, $\abs{s}$ is the \textit{length} of $s$.
A subset $S$ of $\X$ is called a \textit{prefix-free set}
if no string in $S$ is a prefix of another string in $S$.
$\XI$ is the set of infinite binary strings,
where an infinite binary string is
infinite to the right but finite to the left.
For any $\alpha\in \XI$ and any $n\in\N^+$,
$\alpha_n$ is the prefix of $\alpha$ of length $n$.
For any partial function $f$,
the domain of definition of $f$ is denoted by $\Dom f$.
We write ``r.e.'' instead of ``recursively enumerable.''

Normally, $o(n)$ denotes any function $f\colon \N^+\to\R$ such
that $\lim_{n \to \infty}f(n)/n=0$.
On the other hand,
$O(1)$ denotes any function $g\colon \N^+\to\R$ such that
there is $C\in\R$ with the property that
$\abs{g(n)}\le C$ for all $n\in\N^+$.

Let $T$ be an arbitrary real number.
$T\bmod 1$ denotes $T - \lfloor T \rfloor$,
where $\lfloor T \rfloor$ is the greatest integer less than or equal to $T$.
Hence, $T\bmod 1 \in [0,1)$.
We identify a real number $T$ with
the infinite binary string $\alpha$ such that
$0.\alpha$ is the base-two expansion of $T\bmod 1$ with infinitely many zeros.
Thus, $T_n$ denotes the first $n$ bits of
the base-two expansion of
the real number
$T\bmod 1$ with infinitely many zeros.

We say that a real number $T$ is \textit{computable} if
there exists a total recursive function $f\colon\N^+ \to \Q$ such that
$\abs{T-f(n)} < 2^{-n}$
for all $n\in\N^+$.
We say that $T$ is \textit{right-computable} if
there exists a total recursive function $g\colon\N^+\to\Q$ such that
$T\le g(n)$ for all $n\in\N^+$ and $\lim_{n\to\infty} g(n)=T$.
We say that $T$ is \textit{left-computable} if $-T$ is right-computable.
It is then easy to see that,
for any $T\in\R$,
$T$ is computable if and only if
$T$ is both right-computable and left-computable.
See e.g.~Pour-El and Richards \cite{PR89} and Weihrauch \cite{W00} 
for the detail of the treatment of
the computability of real numbers and real functions on a discrete set.

\subsection{Algorithmic information theory}
\label{ait}

In the following
we concisely review some definitions and results of
algorithmic information theory
\cite{C75,C87a,C87b}.
A \textit{computer} is a partial recursive function
$C\colon \X\to \X$
such that
$\Dom C$ is a prefix-free set.
For each computer $C$ and each $s \in \X$,
$\K_C(s)$ is defined by
$\K_C(s) =
\min
\left\{\,
  \abs{p}\,\big|\;p \in \X\>\&\>C(p)=s
\,\right\}$.
A computer $U$ is said to be \textit{optimal} if
for each computer $C$ there exists a constant $\Sim(C)$
with the following property;
if $C(p)$ is defined, then there is a $p'$ for which
$U(p')=C(p)$ and $\abs{p'}\le\abs{p}+\Sim(C)$.
It is easy to see that there exists an optimal computer.
Note that the class of optimal computers equals to
the class of functions which are computed
by \textit{universal self-delimiting Turing machines}
(see \cite{C75} for the detail).
We choose
a particular
optimal computer $U$ as the standard one for use,
and define $\K(s)$ as $\K_U(s)$,
which is referred to as
the \textit{program-size complexity} of $s$,
the \textit{information content} of $s$, or
the \textit{Kolmogorov complexity} of $s$.
Thus, $\K(s) \le \K_C(s) + \Sim(C)$ for any computer $C$.

The program-size complexity $\K(s)$ is originally defined
using the concept of program-size, as stated above.
However,
it is possible to define $\K(s)$ without referring to such a concept,
i.e.,
as in the following,
we first introduce a \textit{universal probability} $m$,
and then define $\K(s)$ as $-\log_2 m(s)$.
We say that $r$ is a \textit{semi-measure on $\X$}
if $r\colon \X\to[0,1]$ such that $\sum_{s\in \X}r(s)\le 1$.
A universal probability is defined as follows \cite{ZL70}.

\vspace*{-1.0mm}

\begin{definition}[universal probability]
  We say that $r$ is a \textit{lower-computable semi-measure} if
  $r$ is a semi-measure on $\X$ and
  there exists a total recursive function
  $f\colon\N^+\times \X\to\Q$
  such that, for each $s\in \X$,
  $\lim_{n\to\infty} f(n,s)=r(s)$ and
  $\forall\,n\in\N^+\;\>0\le f(n,s)\le r(s)$.
  We say that a lower-computable semi-measure $m$ is
  a \textit{universal probability} if
  for any lower-computable semi-measure $r$,
  there exists a real number $c>0$ such that,
  for all $s\in \X$, $c\,r(s)\le m(s)$.
\hfill\QED
\end{definition}

The following theorem can be then shown
(see e.g.~Theorem 3.4 of Chaitin \cite{C75} for its proof).
Here, $P(s)$ is defined as $\sum_{U(p)=s}2^{-\abs{p}}$
for each $s\in \X$.

\vspace*{-2.0mm}

\begin{theorem}\label{eup}
  Both $2^{-\K(s)}$ and $P(s)$ are universal \\
  probabilities.
  \hfill\QED
\end{theorem}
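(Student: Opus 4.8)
The plan is to verify first that $2^{-\K(s)}$ and $P(s)$ are lower-computable semi-measures, and then to obtain their universality from a single construction that turns an arbitrary lower-computable semi-measure into a prefix-free computer.

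First I would dispose of the semi-measure property. Since $\Dom U$ is a prefix-free subset of $\X$, Kraft's inequality gives $\sum_{p\in\Dom U}2^{-\abs{p}}\le 1$, and regrouping that sum by output value yields $\sum_{s\in\X}P(s)=\sum_{p\in\Dom U}2^{-\abs{p}}\le 1$. Because the shortest program for $s$ is one of the programs contributing to $P(s)$, one has $2^{-\K(s)}\le P(s)$ for every $s$ (reading $2^{-\K(s)}$ as $0$ when $s$ is not in the range of $U$), hence $\sum_{s\in\X}2^{-\K(s)}\le 1$ as well; both functions plainly take values in $[0,1]$. For lower-computability, use that $U$ is partial recursive to fix a recursive enumeration $(p_0,U(p_0)),(p_1,U(p_1)),\dots$ of the graph of $U$; then $f(n,s)=\sum\{2^{-\abs{p_i}}:i<n,\ U(p_i)=s\}$ approximates $P(s)$ recursively, monotonically, and from below, while $g(n,s)=\max(\{0\}\cup\{2^{-\abs{p_i}}:i<n,\ U(p_i)=s\})$ does the same for $2^{-\K(s)}$.

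The substantive step is universality. Let $r$ be an arbitrary lower-computable semi-measure, witnessed by a recursive $f$ as in the definition; after replacing $f(n,s)$ by $\max_{k\le n}f(k,s)$ I may assume $f$ is nondecreasing in $n$. The idea is to convert the growing lower approximations of $r$ into length requests for a Kraft--Chaitin machine. For each $s\in\X$ and each $k\in\N^+$ I would search for the first stage $n$ with $f(n,s)>2^{-k}$ and, upon finding it, issue the request ``a program of length $k+1$ for output $s$''; dovetailing over all pairs $(s,k)$ makes the set of issued requests r.e. Writing $k_0(s)$ for the least $k$ with $2^{-k}<r(s)$, the requests associated with $s$ are exactly those with $k\ge k_0(s)$, so their total weight equals $\sum_{k\ge k_0(s)}2^{-(k+1)}=2^{-k_0(s)}\in[\tfrac12 r(s),\,r(s))$; summing over $s$ gives a total request weight $\le\sum_{s\in\X}r(s)\le 1$, so the request sequence satisfies the Kraft inequality. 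By the Kraft--Chaitin theorem there is a computer $C$ (whose domain is automatically prefix-free) fulfilling every request; in particular the request of length $k_0(s)+1$ forces $\K_C(s)\le k_0(s)+1$, whence $2^{-\K_C(s)}\ge 2^{-(k_0(s)+1)}\ge\tfrac14 r(s)$. Finally $\K(s)\le\K_C(s)+\Sim(C)$ gives $2^{-\K(s)}\ge 2^{-\Sim(C)-2}r(s)$, and then $P(s)\ge 2^{-\K(s)}\ge 2^{-\Sim(C)-2}r(s)$ too; so $c=2^{-\Sim(C)-2}$ witnesses universality of both $2^{-\K(\cdot)}$ and $P$ simultaneously.

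I expect the main obstacle to be the Kraft--Chaitin step itself, i.e.\ showing that an r.e.\ list of length requests whose weights sum to at most $1$ can be realised by assigning pairwise prefix-incomparable programs of exactly the prescribed lengths; the standard argument greedily carves disjoint dyadic subintervals out of $[0,1)$ and must verify that room never runs out. Everything else---the Kraft inequality, the geometric-series bookkeeping, the monotone recursive approximations, and the elementary inequality $2^{-\K(s)}\le P(s)$---is routine.
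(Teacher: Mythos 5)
Your argument is correct and is essentially the standard proof that the paper delegates to Chaitin's Theorem 3.4: verify the semi-measure and lower-computability properties directly, then simulate an arbitrary lower-computable semi-measure via an r.e.\ sequence of length requests and the Kraft--Chaitin theorem, and transfer the resulting bound through the optimality of $U$ and the inequality $2^{-\K(s)}\le P(s)$. The only ingredient you leave unproved, the Kraft--Chaitin construction itself, is exactly the standard cited lemma, so nothing essential is missing.
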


By Theorem \ref{eup}, we see that, for any universal probability $m$,
\begin{equation}\label{eq: K_m}
  \K(s)=-\log_2 m(s)+O(1).
\end{equation}

Thus it is possible to define $\K(s)$ as $-\log_2 m(s)$
with
a particular
universal probability $m$ instead of as $\K_U(s)$.
Note that
the difference up to an additive constant is
nonessential
to algorithmic information theory.
Any universal probability is not computable,
as corresponds to the uncomputability of $\K(s)$.
As a result, we see that
$0<\sum_{s\in\X}m(s)<1$ for any universal probability $m$.

For any $\alpha\in\XI$,
we say that $\alpha$ is
\textit{weakly Chaitin random}
if there exists $c\in\N$ such that,
for all $n\in\N^+$,
$n-c\le \K(\alpha_n)$ 
\cite{C75,C87b}.
As the total sum of the universal probability $2^{-\K(s)}$,
Chaitin \cite{C87a} introduced the real number $\theta$ by
\begin{equation}\label{theta}
  \theta=\sum_{s\in\X}2^{-\K(s)}.
\end{equation}
Then \cite{C87a} showed that $\theta$ is weakly Chaitin random.

In the works \cite{T99,T02},
we generalized the notion of
the randomness of an infinite binary string
so that the degree of the randomness can be characterized
by a real number $D$ with $0<D\le 1$ as follows.

\vspace*{-1.0mm}

\begin{definition}[weakly Chaitin $D$-random]
  Let $D\in\R$ with $D\ge 0$,
  and let $\alpha \in\XI$.
  We say that $\alpha$ is \textit{weakly Chaitin $D$-random} if
  there exists $c\in\N$ such that,
  for all $n\in\N^+$,
  $Dn-c \le \K(\alpha_n)$.
  \hfill\QED
\end{definition}

\vspace*{-3.7mm}

\begin{definition}[$D$-compressible]
Let $D\in\R$ with $D\ge 0$,
and let $\alpha \in\XI$.
We say that $\alpha$ is \textit{$D$-compressible} if
$\K(\alpha_n)\le Dn+o(n)$,
which is equivalent to
$\Limsup_{n \to \infty}\K(\alpha_n)/n\le D$.
\hfill\QED
\end{definition}

In the case of $D=1$,
the weak Chaitin $D$-randomness results in the weak Chaitin randomness.
For any $D\in[0,1]$ and any $\alpha\in\XI$,
if $\alpha$ is weakly Chaitin $D$-random and $D$-compressible,
then
\begin{equation}\label{compression-rate}
  \lim_{n\to \infty} \frac{\K(\alpha_n)}{n} = D,
\end{equation}
and therefore
the
compression rate
of $\alpha$ by the program-size complexity $\K$ is
equal to $D$.
Note, however, that \eqref{compression-rate}
does not necessarily implies that $\alpha$ is weakly Chaitin $D$-random.

In the work \cite{T02},
we generalized $\theta$ to $\theta^D$ by
\begin{equation}\label{thetaD}
  \theta^D =
  \sum_{s\in\X}2^{-\frac{\K(s)}{D}}
  \qquad ( D > 0 ).
\end{equation}
Thus,
$\theta=\theta^1$.
If $0<D\le 1$, then $\theta^D$ converges and $0<\theta^D<1$,
since $\theta^D\le \theta<1$.
Theorem \ref{potd} below
was mentioned
in Remark 3.2 of Tadaki \cite{T02}.

\vspace*{-1.0mm}

\begin{theorem}[Tadaki \cite{T02}]\label{potd}
  Let $D\in\R$.
  \begin{enumerate}
    \item If $0<D\le 1$ and $D$ is computable,
      then $\theta^D$ is weakly Chaitin $D$-random.
    \item If $0<D\le 1$ and $D$ is computable,
      then $\theta^D$ is $D$-compressible.
    \item If $1<D$, then $\theta^D$ diverges to $\infty$.
\hfill\QED
  \end{enumerate}
\end{theorem}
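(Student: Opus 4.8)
The plan is to treat the three items separately, in increasing order of difficulty.

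\emph{Item~3 ($1<D$).} This reduces to a counting estimate. Every string $s$ of length $n$ satisfies $\K(s)\le n+\K(n)+O(1)\le n+O(\log n)$, by describing $s$ via a shortest program for its length followed by the bits of $s$. Hence the $2^{n}$ strings of length $n$ already contribute
\[
  \sum_{\abs{s}=n}2^{-\K(s)/D}\;\ge\;2^{n}\cdot 2^{-(n+O(\log n))/D}\;=\;2^{\,n(1-1/D)-O(\log n)},
\]
whose exponent tends to $+\infty$ because $1-1/D>0$. Since $\theta^D$ dominates each of these finite partial sums, $\theta^D$ diverges to $\infty$.

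\emph{Item~1 ($0<D\le 1$, $D$ computable).} The plan is to adapt the classical proof that Chaitin's $\Omega$ is weakly Chaitin random, carrying the exponent $1/D$ through. As $D$ is computable and $2^{-\K(s)}$ is lower-computable (Theorem~\ref{eup}), the map $s\mapsto 2^{-\K(s)/D}$ is itself a lower-computable semi-measure, so $\theta^D$ is left-computable. Fix a recursive nondecreasing rational approximation $\sigma_t\nearrow\theta^D$, arranged so that a string $s$ contributes a nonzero amount to $\sigma_t$ only after some $U$-program for $s$ has been found; write $V_t$ for the finite set of strings that have so contributed by stage $t$. One first checks that $\theta^D$ is not a dyadic rational, so that $0.\theta^D_n<\theta^D$ for every $n$. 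Now, from the string $\theta^D_n$ --- whose length is $n$ --- form $0.\theta^D_n$ and search for the first $t$ with $\sigma_t>0.\theta^D_n$; it exists, and then $\theta^D-\sigma_t<2^{-n}$. Consequently every $s\notin V_t$ satisfies $2^{-\K(s)/D}\le\theta^D-\sigma_t<2^{-n}$, i.e.\ $\K(s)>Dn$. Let $s^{*}$ be the first string, in the order on $\X$, not lying in $V_t$; it exists since $V_t$ is finite, and it is obtained effectively from $\theta^D_n$. Composing this procedure with $U$ gives a computer $C$ --- its domain, contained in $\Dom U$, being prefix-free --- with $C(p)=s^{*}$ whenever $p$ is a shortest $U$-program for $\theta^D_n$, so $\K(s^{*})\le\K(\theta^D_n)+\Sim(C)$; combined with $\K(s^{*})>Dn$ this yields $\K(\theta^D_n)\ge Dn-\Sim(C)$ for all $n$.

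\emph{Item~2 ($0<D\le 1$, $D$ computable).} Here the goal is, for every $n$, a program for $\theta^D_n$ of length $Dn+o(n)$, equivalently $\Limsup_{n\to\infty}\K(\theta^D_n)/n\le D$. The natural strategy is to hand the program just enough data to recover the finite complexity profile $\{(s,\K(s)):\K(s)\le\ell\}$ for a suitable cutoff $\ell=\ell(n)$: from it one computes $\sum_{\K(s)\le\ell}2^{-\K(s)/D}$ exactly and bounds the remaining tail using $\sum_{s}2^{-\K(s)}<1$, so that the first $n$ bits of $\theta^D$ become pinned down (the computability of $D$ entering in the arithmetic). The main obstacle is quantitative: the crude tail estimate $\sum_{\K(s)>\ell}2^{-\K(s)/D}<\theta\,2^{-\ell(1-D)/D}$ forces $\ell\approx Dn/(1-D)$, which would only give $D/(1-D)$-compressibility; so the delicate step --- which I expect to be the real work of the theorem --- is to sharpen this accounting, exploiting the actual growth of the complexity profile of $U$ in interplay with the exponent $1/D$, so that the amount of information the program must carry comes out to $Dn+o(n)$ rather than $Dn/(1-D)+o(n)$.
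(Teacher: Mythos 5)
First, a point of comparison: the paper itself does not prove Theorem~\ref{potd}; it imports it from Remark~3.2 of Tadaki \cite{T02}, so there is no in-text proof to measure you against. On their own merits, your items (iii) and (i) are essentially right and are the standard arguments. For (iii), the counting bound $\K(s)\le\abs{s}+\K(\abs{s})+O(1)$ does the job. For (i), you carry Chaitin's argument through with the exponent $1/D$, using the computability of $D$ to get the recursive approximation $\sigma_t\nearrow\theta^D$; the one loose end is the asserted-but-unproved claim that $\theta^D$ is not a dyadic rational. That claim does need an argument (dyadic rationality is not yet excluded by anything at that point), but it is easily supplied: if $\theta^D$ were dyadic (hence computable), the very same search procedure run on a computed approximation would give $\K(s^*)=O(\log n)$ for a string with $\K(s^*)>Dn$, a contradiction; alternatively you can sidestep the issue entirely by searching for the first $t$ with $\sigma_t>0.\theta^D_n-2^{-n}$, which always exists and only weakens the conclusion to $\K(s)>D(n-1)$, i.e.\ costs a constant.

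Item (ii) is a genuine gap, and your instinct that a ``sharper tail estimate'' is needed points in the wrong direction: the tail $\sum_{\K(s)>k}2^{-\K(s)/D}$ really is of order $2^{-k(1-D)/D}$, so any scheme that insists on knowing the complexity profile exactly up to a cutoff making the tail smaller than $2^{-n}$ is stuck at $\approx Dn/(1-D)$ bits, which for $D\ge 1/2$ is even weaker than the trivial bound $\K(\theta^D_n)\le n+o(n)$. What is missing is a way to pay for the two pieces of information jointly rather than separately. Keep the cutoff at $k=\lceil Dn\rceil$ and aim to describe (a) the number $N_k$ of halting programs of length $\le k$, which determines $\K(s)$ exactly for every $s$ with $\K(s)\le k$ and hence the head $\sum_{\K(s)\le k}2^{-\K(s)/D}$ to arbitrary precision ($D$ being computable), and (b) the integer $\lfloor 2^nT\rfloor$, where $T$ is the tail, which is at most about $2^{Dn}$. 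Described separately these cost about $Dn+Dn$ bits; the key observation is that along the enumeration of $\Dom U$ the pair (current halting count, current truncated tail) passes through only $O(2^{Dn})$ distinct values: the count is nondecreasing with at most $2^{k+1}$ values, the truncated tail stays below $O(2^{Dn})$ (since $\sum_s 2^{-\K_t(s)}\le 1$ at every stage), can only increase while the count is frozen, and drops by $O(1)$ each time the count increments, because the term that migrates to the head is at most $2^{-k/D}\approx 2^{-n}$. Hence the true final pair can be specified by its index in this enumeration, i.e.\ by $Dn+O(\log n)$ bits, from which $\theta^D$ is recovered to within $O(2^{-n})$ and thus $\K(\theta^D_n)\le Dn+O(\log n)$. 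Without an idea of this kind (or the corresponding argument from \cite{T02}), your item (ii) remains a statement of intent rather than a proof, and since (ii) is exactly the half that makes $\lim_n\K(\theta^D_n)/n=D$ an equality rather than a lower bound, it is the part that cannot be waved through.
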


\smallskip

\section{The Shannon entropy of a universal probability}
\label{Shannon}

We say that $p=(p_1,\dots,p_n)$ is a \textit{probability distribution}
if $p_i\in[0,1]$ for all $i=1,\dots,n$ and $p_1+ \dots + p_n=1$.
For any probability distribution $p=(p_1,\dots,p_n)$,
the \textit{Shannon entropy} $H(p)$ of $p$ is defined by
\begin{equation}\label{def-Shannon}
  H(p)=-\sum_{i=1}^n p_i\ln p_i,
\end{equation}
where the $\ln$ denotes the natural logarithm \cite{S48}.
We say that $p=(p_1,\dots,p_n)$ is a \textit{semi-probability distribution}
if $p_i\in[0,1]$ for all $i=1,\dots,n$ and $p_1+ \dots + p_n\le 1$.
We define the Shannon entropy $H(p)$
also for
any
semi-probability distribution $p=(p_1,\dots,p_n)$
by \eqref{def-Shannon}.
Moreover,
for
any
semi-measure $r$ on $\X$,
we define the \textit{Shannon entropy} $H(r)$ of $r$ by
\begin{equation*}
  H(r)=-\sum_{s\in\X} r(s)\ln r(s)
\end{equation*}
in a similar manner to \eqref{def-Shannon}.

In this section,
we prove that the Shannon entropy $H(m)$ of
an arbitrary universal probability $m$ diverges to $\infty$.
For convenience, however,
we first prove the following more general theorem,
Theorem \ref{numerator},
from which the result follows.
For example,
Theorem \ref{numerator} itself can be used to determine
the properties of the notions of \textit{thermodynamic quantities}
introduced by Tadaki \cite{T08CiE}
into algorithmic information theory.

\begin{theorem}\label{numerator}
Let $A$ be an infinite r.e.~subset of $\X$
and let $f\colon\N^+\to\N$ be a total recursive function such that
$\lim_{n\to\infty}f(n)=\infty$.
Then the following hold.
\begin{enumerate}
  \item $\sum_{U(p)\in A}f(\abs{p})2^{-\abs{p}}$ diverges to $\infty$.
  \item If there exists $l_0\in\N^+$ such that
    $f(l)2^{-l}$ is a nonincreasing function of $l$ for all $l\ge l_0$,
    then $\sum_{s\in A} f(\K(s))2^{-\K(s)}$ diverges to $\infty$.
\end{enumerate}
\end{theorem}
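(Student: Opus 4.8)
The plan is to treat both parts uniformly, exploiting the universality of the semimeasures $P$ and $2^{-\K(s)}$ (Theorem \ref{eup}). For each part I would exhibit a lower-computable function $\Phi\colon\X\to[0,\infty)$ whose total sum $\sum_{s\in\X}\Phi(s)$ equals the quantity in question, argue by contradiction that this sum is finite, pick $N\in\N^+$ with $\sum_s\Phi(s)<N$ so that $\Phi/N$ is a lower-computable semimeasure, and then invoke universality: there is $c>0$ with $\tfrac cN\,\Phi(s)\le r(s)$ for all $s$, where $r$ is the relevant universal probability. Reading this inequality back will force $f$ to be bounded on an infinite set of complexity values, contradicting $\lim_{n\to\infty}f(n)=\infty$.

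For (i), set $\Phi(s)=\sum_{U(p)=s}f(\abs p)2^{-\abs p}$ when $s\in A$ and $\Phi(s)=0$ otherwise, so that $\sum_s\Phi(s)=\sum_{U(p)\in A}f(\abs p)2^{-\abs p}$. Since $A$ is r.e.\ and $f$ is total recursive, summing the dyadic rationals $f(\abs p)2^{-\abs p}$ over the programs $p$ of bounded length that halt on $s$ within bounded time gives a total recursive approximation of $\Phi$ from below, so $\Phi$ is lower-computable. Assume $\sum_s\Phi(s)$ is finite, fix $N>\sum_s\Phi(s)$, and apply universality of $P$ to $\Phi/N$: there is $c>0$ with $\tfrac cN\Phi(s)\le P(s)$ for all $s$. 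As $f(n)\to\infty$, choose $l_1$ with $f(l)>N/c$ for every $l\ge l_1$; as $A$ is infinite, choose $s\in A$ with $\K(s)\ge l_1$ (only finitely many strings have complexity below $l_1$). Every $p$ with $U(p)=s$ has $\abs p\ge\K(s)\ge l_1$, hence $f(\abs p)>N/c$, so $\tfrac cN\Phi(s)=\tfrac cN\sum_{U(p)=s}f(\abs p)2^{-\abs p}>\sum_{U(p)=s}2^{-\abs p}=P(s)$, a contradiction. Hence $\sum_{U(p)\in A}f(\abs p)2^{-\abs p}=\infty$.

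For (ii), let $A'=\{\,s\in A\mid\K(s)\ge l_0\,\}$; the set $\{s:\K(s)<l_0\}$ is finite, hence $\{s:\K(s)\ge l_0\}$ is recursive and $A'$ is r.e.\ and infinite, and it suffices to prove $\sum_{s\in A'}f(\K(s))2^{-\K(s)}=\infty$ since only finitely many terms are dropped. Put $\Phi(s)=f(\K(s))2^{-\K(s)}$ for $s\in A'$ and $\Phi(s)=0$ otherwise. The crucial point is that $\Phi$ is lower-computable: running $U$ on all inputs gives a recursive non-increasing approximation $\K^{(t)}(s)\downarrow\K(s)$, and for $s\in A'$ one has $\K^{(t)}(s)\ge\K(s)\ge l_0$, so by the hypothesis that $f(l)2^{-l}$ is non-increasing for $l\ge l_0$ the dyadic rationals $f(\K^{(t)}(s))2^{-\K^{(t)}(s)}$ increase to $f(\K(s))2^{-\K(s)}$ from below; coupling this with the enumeration of $A'$ yields the required recursive lower approximation. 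Assuming $\sum_s\Phi(s)<\infty$, choose $N>\sum_s\Phi(s)$ and apply universality of $2^{-\K(s)}$ to $\Phi/N$: there is $c>0$ with $\tfrac cN f(\K(s))2^{-\K(s)}\le 2^{-\K(s)}$ for every $s\in A'$, i.e.\ $f(\K(s))\le N/c$ for all $s\in A'$. But $A'$ infinite forces $\{\K(s)\mid s\in A'\}$ to be infinite, so it meets the cofinite set $\{l:f(l)>N/c\}$; this contradiction gives $\sum_{s\in A'}f(\K(s))2^{-\K(s)}=\infty$, hence $\sum_{s\in A}f(\K(s))2^{-\K(s)}=\infty$.

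The step I expect to require the most care is the lower-computability of $\Phi$ in part (ii): it is exactly there that the eventual monotonicity of $f(l)2^{-l}$ is indispensable, since without it $f(\K^{(t)}(s))2^{-\K^{(t)}(s)}$ need not be monotone in $t$ and $\Phi$ need not be lower-computable — which is also why part (i), whose $\Phi$ is a sum of non-negative contributions and hence lower-computable unconditionally, needs no monotonicity hypothesis. The only other points needing attention are routine: that passing from $A$ to $A'$ (resp.\ discarding complexities below $l_1$) changes the sum by a finite amount, that $\Phi/N$ indeed takes values in $[0,1]$ and is a genuine lower-computable semimeasure in the sense of the Definition, and that $A$ being infinite makes $\{\K(s):s\in A\}$ unbounded.
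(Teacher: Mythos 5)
Your proposal is correct and follows essentially the same route as the paper: assume convergence, package the sum into a lower-computable semimeasure (after normalizing by an upper bound), invoke universality of $P$ in part (i) and of $2^{-\K(s)}$ in part (ii), and contradict the resulting bound on $f$ using that $A$ is infinite and $f\to\infty$. The only cosmetic difference is in (ii), where you discard the finitely many strings with $\K(s)<l_0$ directly (using that a finite set is recursive), whereas the paper discards strings of length below a threshold $n_0$ chosen so that $\K(s)\ge l_0$ beyond it; both serve the same purpose of making the approximation of $f(\K(s))2^{-\K(s)}$ monotone from below.
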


\begin{proof}
(i) Contrarily,
assume that $\sum_{U(p)\in A}f(\abs{p})2^{-\abs{p}}$ converges.
Then, there exists $d\in\N^+$ such that
$\sum_{U(p)\in A}f(\abs{p})2^{-\abs{p}}\le d$.
We define the function $r\colon \X\to [0,\infty)$ by
\begin{equation*}
  r(s)=\frac{1}{d}\sum_{U(p)=s}f(\abs{p})2^{-\abs{p}}
\end{equation*}
if $s\in A$; $r(s)=0$ otherwise.
We then see that
$\sum_{s\in\X}r(s)\le 1$
and
therefore $r$ is a lower-computable semi-measure.
Since $P(s)$ is a universal probability by Theorem \ref{eup},
there exists $c\in\N^+$ such that $r(s)\le cP(s)$ for all $s\in\X$.
Hence we have
\begin{equation}\label{fp2p}
  \sum_{U(p)=s}(cd-f(\abs{p}))2^{-\abs{p}}\ge 0
\end{equation}
for all $s\in A$.
On the other hand,
since $A$ is an infinite set and $\lim_{n\to\infty}f(n)=\infty$,
there is $s_0\in A$ such that $f(\abs{p})>cd$ for all $p$ with $U(p)=s_0$.
Therefore we have $\sum_{U(p)=s_0}(cd-f(\abs{p}))2^{-\abs{p}}< 0$.
However, this contradicts \eqref{fp2p},
and the proof of (i) is completed.

(ii) We first note that there is $n_0\in\N$ such that
$\K(s)\ge l_0$ for all $s$ with $\abs{s}\ge n_0$.
Now,
let us assume contrarily that $\sum_{s\in A} f(\K(s))2^{-\K(s)}$ converges.
Then, there exists $d\in\N^+$ such that
$\sum_{s\in A} f(\K(s))2^{-\K(s)}\le d$.
We define the function $r\colon \X\to [0,\infty)$ by
\begin{equation*}
  r(s)=\frac{1}{d}f(\K(s))2^{-\K(s)}
\end{equation*}
if $s\in A$ and $\abs{s}\ge n_0$; $r(s)=0$ otherwise.
We then see that $\sum_{s\in\X}r(s)\le 1$ and
therefore $r$ is a lower-computable semi-measure.
Since $2^{-\K(s)}$ is a universal probability by Theorem \ref{eup},
there exists $c\in\N^+$ such that $r(s)\le c2^{-\K(s)}$ for all $s\in\X$.
Hence, if $s\in A$ and $\abs{s}\ge n_0$,
then $cd\ge f(\K(s))$.
On the other hand,
since $A$ is an infinite set and $\lim_{n\to\infty}f(n)=\infty$,
there is $s_0\in A$ such that $\abs{s_0}\ge n_0$ and $f(\K(s_0))>cd$.
Thus, we have a contradiction, and the proof of (ii) is completed.
\end{proof}

\medskip

From Theorem \ref{numerator} (ii),
we obtain the following result, as desired.

\begin{corollary}\label{ShannonP}
Let $m$ be a universal probability.
Then the Shannon entropy $H(m)$ of $m$ diverges to $\infty$.
\end{corollary}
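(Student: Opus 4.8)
The plan is to deduce Corollary~\ref{ShannonP} from Theorem~\ref{numerator}~(ii) by choosing the set $A$ and the recursive function $f$ appropriately. Recall that $H(m)=-\sum_{s\in\X} m(s)\ln m(s)$. By \eqref{eq: K_m} there is a constant such that $\K(s)=-\log_2 m(s)+O(1)$, so $m(s)$ and $2^{-\K(s)}$ differ by bounded multiplicative factors; hence, up to a bounded multiplicative error and an additive term of the form $O(1)\cdot\sum_s m(s)$ (which is finite), the series $-\sum_s m(s)\ln m(s)$ behaves like $\sum_s \K(s)\,2^{-\K(s)}$ (after converting $\ln$ to $\log_2$, which only rescales by the constant $\ln 2$). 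So it suffices to show that $\sum_{s\in\X}\K(s)2^{-\K(s)}$ diverges to $\infty$, and this is exactly the shape of Theorem~\ref{numerator}~(ii) with $A=\X$ (an infinite r.e.\ set) and $f(n)=n$.

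First I would verify the hypotheses of Theorem~\ref{numerator}~(ii) for this choice: $A=\X$ is an infinite r.e.\ subset of $\X$; $f(n)=n$ is total recursive with $\lim_{n\to\infty}f(n)=\infty$; and $f(l)2^{-l}=l\,2^{-l}$ is nonincreasing for $l\ge l_0$ with, say, $l_0=2$ (since $l2^{-l}$ is decreasing once $l\ge 1/\ln 2$, and certainly for $l\ge 2$). Therefore Theorem~\ref{numerator}~(ii) gives $\sum_{s\in\X}\K(s)2^{-\K(s)}=\infty$.

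Next I would turn this into divergence of $H(m)$. Fix the constant $c_0$ from \eqref{eq: K_m} so that $\abs{\K(s)+\log_2 m(s)}\le c_0$ for all $s$, equivalently $2^{-\K(s)-c_0}\le m(s)\le 2^{-\K(s)+c_0}$. Then $-\ln m(s)\ge (\K(s)-c_0)\ln 2$, and splitting off the finitely many strings with $\K(s)\le c_0$ (which contribute a finite amount), for the remaining $s$ we have $m(s)(-\ln m(s))\ge \ln 2\,(\K(s)-c_0)\,2^{-\K(s)-c_0}$. Summing over these $s$ and using that $\sum_s (\K(s)-c_0)2^{-\K(s)}=\sum_s\K(s)2^{-\K(s)}-c_0\sum_s 2^{-\K(s)}$ diverges (the first term is $\infty$, the second is the finite number $\theta<1$), we conclude $H(m)=\infty$. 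One should also note that every term $m(s)(-\ln m(s))$ is nonnegative, so the series is well-defined with value in $[0,\infty]$ and the lower bounds above legitimately force divergence.

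I do not expect any genuine obstacle here: all the work is in Theorem~\ref{numerator}~(ii), which is already proved, and the remaining step is just the elementary translation between $\K(s)$ and $m(s)$ via \eqref{eq: K_m}. The only minor point requiring a little care is the sign/monotonicity bookkeeping --- ensuring $l2^{-l}$ is eventually nonincreasing (trivial), handling the finitely many strings with very small $\K(s)$ where $-\ln m(s)$ might be negative, and keeping track of the constant $\ln 2$ from switching between $\log_2$ and $\ln$. None of these affects divergence, so the corollary follows cleanly.
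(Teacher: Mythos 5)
Your proposal is correct and follows essentially the same route as the paper: both invoke Theorem~\ref{numerator}~(ii) with $A=\X$ and $f(n)=n$ to get $\sum_{s}\K(s)2^{-\K(s)}=\infty$, and then transfer this to $H(m)$ via \eqref{eq: K_m}. The only cosmetic difference is in the transfer step: you multiply the two one-sided bounds $m(s)\ge 2^{-\K(s)-c_0}$ and $-\ln m(s)\ge(\K(s)-c_0)\ln 2$ termwise, while the paper uses one side of \eqref{eq: K_m} together with the eventual monotonicity of $x2^{-x}$; both yield the same divergent lower bound.
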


\begin{proof}
We first note that
there is a real number $x_0>0$ such that
the function $x2^{-x}$ of a real number $x$ is decreasing for $x\ge x_0$.
For this $x_0$,
there is $n_0\in\N$ such that
$-\log_2 m(s)\ge x_0$ for all $s$ with $\abs{s}\ge n_0$.
On the other hand,
by \eqref{eq: K_m},
there is $c\in\N$ such that $-\log_2 m(s)\le \K(s)+c$ for all $s\in\X$.
Thus, we see that
\begin{equation}\label{inese}
\begin{split}
  -&\sum_{s\in \X\text{ \& }\abs{s}\ge n_0} m(s)\log_2 m(s) \\
  &\ge
  \sum_{s\in \X\text{ \& }\abs{s}\ge n_0}(\K(s)+c)2^{-\K(s)-c} \\
  &=
  2^{-c}\sum_{s\in \X\text{ \& }\abs{s}\ge n_0}\K(s)2^{-\K(s)} \\
  &\hspace*{3.8mm}
  +c2^{-c}\sum_{s\in \X\text{ \& }\abs{s}\ge n_0}2^{-\K(s)}.
\end{split}
\end{equation}
Using Theorem \ref{numerator} (ii) with $A=\X$ and $f(n)=n$,
we see that
$\sum_{s\in \X}\K(s)2^{-\K(s)}$ diverges to $\infty$.
It follows from \eqref{inese} that
$-\sum_{s\in \X} m(s)\log_2 m(s)$
also diverges to $\infty$.
This completes the proof.
\end{proof}

\medskip

\section{The power sum of a universal probability}
\label{power}

In this section,
we investigate the convergence or divergence of
the power sum $\sum_{s\in\X} m(s)^q$ of a universal probability $m$,
and evaluate its degree of randomness
if it converges,
by means of
the notions
of the weak Chaitin $D$-randomness
and the $D$-compressibility.
We first consider
the notion of the weak Chaitin $D$-randomness of the power sum of
a universal probability.
We can generalize Theorem~\ref{potd} (i) and (iii)
on the specific universal probability $2^{-\K(s)}$
over an arbitrary universal probability as follows.

\begin{theorem}\label{psup}
  Let $m$ be a universal probability,
  and let $q\in\R$.
  \begin{enumerate}
    \item If $q\ge 1$ and $q$ is a right-computable real number,
      then $\sum_{s\in\X} m(s)^q$ converges to a left-computable real number
      which is weakly Chaitin $1/q$-random.
    \item If $0<q<1$, then $\sum_{s\in\X} m(s)^q$ diverges to $\infty$.
      \hfill\QED
  \end{enumerate}
\end{theorem}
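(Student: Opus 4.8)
The plan is to dispatch part~(ii) by a direct comparison with the series $\theta^{1/q}$ of \eqref{thetaD}, and to prove part~(i) by the ``$\Omega$-style'' argument of Chaitin and Tadaki behind Theorem~\ref{potd}~(i), carried out for a general universal probability. For part~(ii), recall that $2^{-\K(s)}$ is a universal probability by Theorem~\ref{eup}, so applying the defining property of $m$ to it yields a constant $c>0$ with $c\,2^{-\K(s)}\le m(s)$ for all $s\in\X$ (this also follows from \eqref{eq: K_m}). Since $q>0$, the map $x\mapsto x^{q}$ is increasing on $[0,\infty)$, whence $c^{q}\,2^{-q\K(s)}\le m(s)^{q}$ and therefore $\sum_{s\in\X}m(s)^{q}\ge c^{q}\sum_{s\in\X}2^{-q\K(s)}=c^{q}\,\theta^{1/q}$. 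Because $0<q<1$ gives $1/q>1$, Theorem~\ref{potd}~(iii) shows that $\theta^{1/q}$, and hence $\sum_{s\in\X}m(s)^{q}$, diverges to $\infty$.

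For part~(i), put $T=\sum_{s\in\X}m(s)^{q}$. Convergence with $0<T<1$ is immediate: $q\ge1$ and $m(s)\le1$ give $m(s)^{q}\le m(s)$, so $T\le\sum_{s\in\X}m(s)<1$. Next I would check that $T$ is left-computable. Since $m$ is lower-computable there is a total recursive $f$ with $f(n,s)\nearrow m(s)$ and $0\le f(n,s)\le m(s)$, and since $q\ge1$ is right-computable there is a total recursive $g$ with $g(n)\searrow q$ and $g(n)\ge q$ (pass to running maxima resp.\ minima if necessary). Because $0\le f(n,s)\le m(s)<1$ and $g(n)\ge q>0$, and because for a base in $[0,1)$ the power $a^{b}$ is increasing in $a$ but non-increasing in $b$, the numbers $f(n,s)^{g(n)}$ increase to $m(s)^{q}$ and never exceed it; under-approximating $f(n,s)^{g(n)}$ from below by rationals and summing over finite subsets of $\X$ exhibits $T$ as a left-computable real. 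It is exactly here that the right-computability of $q$ enters, together with $m(s)<1$; for left-computable $q$ the approximation would come from the wrong side.

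It remains to prove that $T$ is weakly Chaitin $1/q$-random, which is where the $\Omega$-style argument enters. Fix a recursive non-decreasing sequence of rationals $L_{t}\nearrow T$ of the above form, $L_{t}=\sum_{s\in V_{t}}\ell_{t}(s)$ with effectively given finite $V_{t}\nearrow\X$ and $0\le\ell_{t}(s)\le m(s)^{q}$, so that $\sum_{s\notin V_{t}}m(s)^{q}\le T-L_{t}$ for every $t$. From the string $T_{n}$ one reads off $n=\abs{T_{n}}$ and, since $0.T_{n}\le T<0.T_{n}+2^{-n}$ by the base-two conventions, effectively finds the first $t$ with $L_{t}>0.T_{n}-2^{-n}$ (such $t$ exists because $L_{t}\to T\ge0.T_{n}>0.T_{n}-2^{-n}$); for this $t$, $T-L_{t}<2^{1-n}$, so every $s\notin V_{t}$ has $m(s)^{q}<2^{1-n}$, hence $-\log_{2}m(s)>(n-1)/q$, hence $\K(s)>(n-1)/q-c_{0}$ by \eqref{eq: K_m}, where $c_{0}$ is the constant for which $\K(s)\ge-\log_{2}m(s)-c_{0}$ for all $s$. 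Thus the least string $s^{*}=s^{*}(n)$ outside the finite set $V_{t}$---a string effectively obtained from $T_{n}$---has $\K(s^{*})>(n-1)/q-c_{0}$. To turn this into a bound on $\K(T_{n})$, let $C$ be the computer that first runs $U$ on its input and, if that halts with output $w$, applies the above procedure to $w$ in place of $T_{n}$; then $\Dom C\subseteq\Dom U$ is prefix-free and $C(p)=s^{*}$ whenever $U(p)=T_{n}$, so $\K(s^{*})\le\K_{C}(s^{*})+\Sim(C)\le\K(T_{n})+\Sim(C)$. Hence $\K(T_{n})>(n-1)/q-c_{0}-\Sim(C)=n/q-O(1)$, and after enlarging the constant to cover the finitely many small $n$ we obtain $c\in\N$ with $n/q-c\le\K(T_{n})$ for all $n\in\N^{+}$, i.e.\ $T$ is weakly Chaitin $1/q$-random.

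The step I expect to demand the most care is the left-computability of $T$: keeping straight the direction in which $a^{b}$ moves when $a<1$ and the exponent is approximated from above (which is precisely why ``right-computable'' is the correct hypothesis on $q$), and, in the randomness argument, routing the reconstruction of $s^{*}$ through $U$ so that the auxiliary computer $C$ has prefix-free domain. Everything downstream of that is the routine computation underlying Theorem~\ref{potd}.
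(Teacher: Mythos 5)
Your proof is correct. The paper states Theorem \ref{psup} without giving a proof (presenting it as the generalization of Theorem \ref{potd} (i) and (iii) from \cite{T02} to an arbitrary universal probability), and your argument is essentially that intended route: part (ii) by comparison with $\theta^{1/q}$ via $c\,2^{-\K(s)}\le m(s)$, and part (i) by the standard Chaitin--Tadaki reconstruction argument applied to a left-computable approximation of $\sum_{s\in\X}m(s)^q$, with the right-computability of $q$ entering exactly where you say it does.
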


Theorem \ref{psup} (i) shows that,
for any $q\in\R$ with $q\ge 1$,
the right-computability of $q$
results in
the weak Chaitin $1/q$-randomness of
the power sum $\sum_{s\in\X} m(s)^q$ of
a universal probability $m$.
On the other hand,
Theorem \ref{converse} below shows that
the converse in a certain sense holds.
Theorem \ref{converse} can be proved
based on the techniques used in the proof of
\textit{the fixed point theorem on compression rate}
\cite{T08CiE}.

\begin{theorem}\label{converse}
Let $m$ be a universal probability,
and let $q\in\R$ with $q\ge 1$.
If $\sum_{s\in\X} m(s)^q$ is
a right-computable real number,
then $q$ is weakly Chaitin $1/q$-random.
\hfill\QED
\end{theorem}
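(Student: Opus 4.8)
The plan is to prove the contrapositive-style statement by showing that if $T := \sum_{s\in\X} m(s)^q$ is right-computable and $q \ge 1$, then $T$ is weakly Chaitin $1/q$-random, and the randomness of $T$ forces a corresponding lower bound on $\K$ of the prefixes of $q$ — or rather, to directly show $q$ itself inherits weak Chaitin $1/q$-randomness from its interaction with $T$. The guiding idea, following the fixed point theorem on compression rate of \cite{T08CiE}, is a self-referential construction: a finite prefix $q_n$ of $q$ of length $n$, together with an approximation index, can be used to describe $T$ to high accuracy, and conversely the right-computability of $T$ together with $q_n$ can be leveraged to pin down more bits of $q$ than the naive counting argument would allow, and this tension yields the bound.

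\smallskip

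Concretely, I would proceed as follows. First, fix a universal probability $m$; by \eqref{eq: K_m} we have $\K(s) = -\log_2 m(s) + O(1)$, so $m(s)^q = 2^{-q\K(s)+O(q)}$ and hence $T = \sum_s m(s)^q$ is comparable to $\theta^{1/q}$ (as defined in \eqref{thetaD}) up to a bounded multiplicative factor, at least when $q$ is fixed; but since $q$ varies we must be careful and treat $q$ as the object of interest. The key lemma I would establish is: from a prefix $q_n$ of $q$ and a few extra bits, one can compute a rational lower bound to $T$ that is within roughly $2^{-n/q'}$ of $T$ for a suitable $q' \le q$, because $m$ is lower-computable and the tail of the sum is controlled once we know $q$ approximately. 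Since $T$ is also right-computable, knowing such a lower bound together with the right-computable upper approximations lets us compute $T_m$, the first $m$ bits of $T$, from $q_n$ plus $O(1)$ extra information whenever $m \le n/q + O(1)$; hence $\K(T_m) \le \K(q_n) + O(1) \le n + O(1)$, i.e. $\K(T_{n/q}) \le n + O(1)$.

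\smallskip

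Next, I would invoke Theorem \ref{psup} (i): since $m$ is a universal probability and $q \ge 1$ is right-computable, $T$ is weakly Chaitin $1/q$-random, so there is $c$ with $\K(T_m) \ge m/q - c$ for all $m$. Now suppose for contradiction that $q$ is \emph{not} weakly Chaitin $1/q$-random; then for every $c$ there are infinitely many $n$ with $\K(q_n) < n/q - c$. Feeding such $n$ into the description scheme above — using $q_n$ to compute $T_m$ for $m \approx n/q \cdot q = n$, wait, I need to recheck the bookkeeping: the point is that a short description of $q_n$ (of length $< n/q - c$) should yield a description of $T_m$ for $m$ large enough that it contradicts weak Chaitin $1/q$-randomness of $T$. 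Getting the two conversion rates — from $q_n$ to an approximation of $T$, and from that approximation back to bits of $T$ — to compose so that the exponents $1/q$ on both sides match up exactly (rather than losing a factor) is the main obstacle; this is precisely where the fixed-point/self-reference technique of \cite{T08CiE} is needed, typically by choosing the accuracy parameter as a function of the \emph{length of the description of $q_n$} itself rather than of $n$, so that the recursion closes. Once the rates are matched, the contradiction with Theorem \ref{psup} (i) completes the proof.

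\smallskip

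I expect the delicate point to be entirely in that rate-matching step: a straightforward argument would show $q$ is weakly Chaitin $1/q^2$-random or something weaker, and upgrading to $1/q$ requires the fixed-point trick. The lower-computability of $m$ (hence of partial sums of $T$) and the right-computability of $T$ must be combined so that an $\varepsilon$-approximation to $T$ is obtainable from $O(\log(1/\varepsilon^q)) = O(q\log(1/\varepsilon))$ bits of $q$, and $T$'s own compressibility bound $\K(T_m) \le m/q + o(m)$ (Theorem \ref{potd} (ii) analogue, or derivable similarly) should be used alongside its weak Chaitin $1/q$-randomness to keep everything tight.
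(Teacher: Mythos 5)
Your proposal has a genuine gap at its load-bearing step. You invoke Theorem \ref{psup} (i) to conclude that $T:=\sum_{s\in\X}m(s)^q$ is weakly Chaitin $1/q$-random, on the grounds that ``$q\ge 1$ is right-computable'' --- but Theorem \ref{converse} assumes right-computability of $T$, not of $q$, and this cannot be repaired. Indeed, under the hypothesis $q$ is left-computable (for rational $x\ge 1$ one has $x<q$ iff $T<\sum_{s\in\X}m(s)^x$, which is eventually confirmed by comparing the decreasing upper approximations of $T$ with the increasing lower approximations of $\sum_{s\in\X}m(s)^x$ available from the lower-computability of $m$), while the conclusion to be proved makes $q$ noncomputable; hence in every non-vacuous instance $q$ is left-computable but \emph{not} right-computable, so Theorem \ref{psup} (i) is never applicable along your route, and the weak Chaitin $1/q$-randomness of $T$ for such $q$ is neither available in the paper nor needed. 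Your accuracy bookkeeping is also off: from $q_n$ one obtains a rational $q'$ with $q\le q'\le q+2^{-n}$, and since $\sum_{s\in\X}m(s)^x\ln(1/m(s))$ converges at $x=q$ for $q>1$, one gets $0\le T-\sum_{s\in\X}m(s)^{q'}\le C2^{-n}$, i.e.\ precision $2^{-n+O(1)}$, not $2^{-n/q'}$ (the case $q=1$ is vacuous, since $\sum_{s\in\X}m(s)$ is left-computable and weakly Chaitin random, hence not right-computable).

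The intended technique (that of the fixed point theorem of \cite{T08CiE}) needs no detour through the bits of $T$, no appeal to the randomness of $T$, and no rate-matching ``fixed-point trick'': given $q_n$, run the input-free upper approximations $t_k\downarrow T$ against the lower approximations $w_j\uparrow\sum_{s\in\X}m(s)^{q'}$ until $t_k-w_j\le(C+1)2^{-n}$; this stage is reached because $T-\sum_{s\in\X}m(s)^{q'}\le C2^{-n}$. At that stage one holds a finite set $F$ of strings with $\sum_{s\notin F}m(s)^q\le t_k-w_j\le(C+1)2^{-n}$, so every $s\notin F$ satisfies $m(s)\le\bigl((C+1)2^{-n}\bigr)^{1/q}$ and hence, by \eqref{eq: K_m}, $\K(s)\ge n/q-O(1)$; outputting the least string not in $F$ yields a string of complexity at least $n/q-O(1)$ computed from $q_n$ plus $O(1)$ advice, whence $\K(q_n)\ge n/q-O(1)$, i.e.\ $q$ is weakly Chaitin $1/q$-random. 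The exponent $1/q$ falls out exactly because an $\epsilon$-window on a sum of $q$-th powers bounds each omitted term by $\epsilon$, giving $-\log_2 m(s)\ge\frac1q\log_2(1/\epsilon)$; the ``delicate rate-matching step'' you defer to an unspecified self-referential argument is precisely this tail-localization argument, which your proposal does not supply.
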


Next, we consider
the notion of the $D$-compressibility of the power sum of
a universal probability.
Theorem \ref{potd} (ii) shows that,
for the specific universal probability $m(s)=2^{-\K(s)}$,
if $q$ is a computable real number with $q>1$,
then
the power sum $\sum_{s\in\X} m(s)^q$
is $1/q$-compressible.
Thus,
the following question naturally arises:
Is $\sum_{s\in\X} m(s)^q$ a $1/q$-compressible real number
for any universal probability $m$
and any computable real number $q>1$ ?
As shown in Theorem \ref{ceps},
however,
we can answer this question negatively.

\begin{theorem}\label{ceps}
  There exists a universal probability $m$
  such that,
  for every computable real number $q>1$,
  $\sum_{s\in\X} m(s)^q$ is weakly Chaitin random
  and therefore not $1/q$-compressible.
\end{theorem}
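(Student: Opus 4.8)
The plan is to produce one explicit universal probability $m$ that reserves a single string $t_0$ to carry half of the weakly Chaitin random real $\theta$ of \eqref{theta}, and then to check that this copy of $\theta$ survives the map $x\mapsto x^{q}$ simultaneously for every computable $q>1$. Concretely I would fix $t_0\in\X$ and set $m(t_0)=\theta/2$ and $m(s)=\tfrac12\,2^{-\K(s)}$ for $s\ne t_0$. Since $\theta<1$ we have $\sum_{s\in\X}m(s)=\theta-\tfrac12\,2^{-\K(t_0)}<1$, and $m$ is lower-computable because $\theta$ is left-computable and $2^{-\K(s)}$ is uniformly left-computable; hence $m$ is a lower-computable semi-measure. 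It is in fact universal: $m(s)\ge\tfrac12\,2^{-\K(s)}$ holds for \emph{every} $s$ --- for $s\ne t_0$ by definition, and for $s=t_0$ because $\theta\ge 2^{-\K(t_0)}$ (one term of the series \eqref{theta}) --- and $2^{-\K(s)}$ is a universal probability by Theorem~\ref{eup}, so $m$ dominates every lower-computable semi-measure up to a multiplicative constant.

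Next, for a computable real $q>1$ I would put $T_q:=\sum_{s\in\X}m(s)^{q}$ and observe, using $\sum_{s}2^{-q\K(s)}=\theta^{1/q}$ from \eqref{thetaD} (which converges since $0<1/q\le 1$), that
\[
  T_q=\frac{\theta^{q}}{2^{q}}+\frac{1}{2^{q}}\sum_{s\ne t_0}2^{-q\K(s)}=\frac{\theta^{q}}{2^{q}}+R_q ,
\]
where $R_q\ge 0$ is left-computable (a sum over the recursive set $\X\setminus\{t_0\}$ of uniformly left-computable nonnegative terms). Thus the natural nondecreasing computable approximation $\sigma^{(t)}\uparrow T_q$ splits as $\sigma^{(t)}=2^{-q}(\theta^{(t)})^{q}+R_q^{(t)}$ with rationals $\theta^{(t)}\uparrow\theta$ and $R_q^{(t)}\ge 0$, which gives $T_q-\sigma^{(t)}\ge 2^{-q}\bigl(\theta^{q}-(\theta^{(t)})^{q}\bigr)$. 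From the first $n$ bits of $T_q$ I would then run this approximation until $\sigma^{(t)}>0.(T_q)_n$; at that point the previous inequality forces $\theta^{q}-(\theta^{(t)})^{q}<2^{-n+q}$, and the mean value theorem for $y\mapsto y^{1/q}$ on $[\theta^{(1)},1)$, whose derivative is bounded below by a positive constant depending only on $q$, yields $\theta-\theta^{(t)}<2^{-n+c_q}$ for some constant $c_q$. Hence $(T_q)_n$ lets us compute a rational within $2^{-n+c_q}$ of $\theta$, so we recover the prefix $\theta_{n-c_q-O(1)}$ and obtain $\K\bigl((T_q)_n\bigr)\ge\K(\theta_{n-c_q-O(1)})-O(1)\ge n-O(1)$, the last step because $\theta$ is weakly Chaitin random~\cite{C87a}. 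Therefore $T_q=\sum_{s\in\X}m(s)^{q}$ is weakly Chaitin random, and a fortiori $\Limsup_n\K((T_q)_n)/n=1>1/q$, so it is not $1/q$-compressible; since $m$ does not depend on $q$, this single $m$ works for all computable $q>1$.

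The hard part will be the inversion passing from an approximation of $\theta^{q}$ to one of $\theta$: it uses that $R_q$ is genuinely nonnegative, so the whole deficit $T_q-\sigma^{(t)}$ can be charged to the $\theta^{q}$-part, and that the inverse $y\mapsto y^{1/q}$, although not Lipschitz near $0$, is computably Lipschitz once one stays above the fixed positive lower bound $\theta^{(1)}$ for $\theta$, with a Lipschitz constant that may depend on $q$ but is harmless since weak Chaitin randomness allows a $q$-dependent additive constant. The remaining verifications --- that $m$ is a semi-measure, lower-computable, and universal, and the displayed identity for $T_q$ --- are routine.
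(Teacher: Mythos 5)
Your construction is correct and the theorem follows, but your route to the randomness of $\sum_{s\in\X}m(s)^q$ is genuinely different from the paper's. The construction itself is a concrete instance of the paper's planting idea: the paper takes an arbitrary universal probability $r$, picks $c$ with $2^{-c}\theta\le r(\lambda)$, and replants $2^{-c}\theta$ at $\lambda$, while you plant $\theta/2$ at $t_0$ on top of $\tfrac12 2^{-\K(s)}$; both are $q$-independent, as the statement requires, and your verification of universality is the same domination argument. The divergence is in the randomness transfer. The paper argues in two cited steps: $\K((a^q)_n)=\K(a_n)+O(1)$ for computable $q\neq 0$, so the planted summand $m(\lambda)^q$ is weakly Chaitin random, and then the Calude--Hertling--Khoussainov--Wang fact $\K(a_n)\le\K((a+b)_n)+O(1)$ for left-computable $a,b$, which transfers randomness from that summand to the whole left-computable sum. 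You instead re-prove exactly the special case needed of both facts by one direct Solovay-style reduction: lower-approximate $T_q$, charge the entire approximation deficit to the $2^{-q}\theta^q$ term using $R_q^{(t)}\le R_q$, invert $x\mapsto x^q$ with a Lipschitz bound valid above the fixed positive lower bound $\theta^{(1)}$, and appeal to the randomness of $\theta$. The paper's version is shorter given the literature and works over an arbitrary base $r$; yours is self-contained but should spell out two routine points you currently gloss: (a) to avoid the corner case $T_q=0.(T_q)_n$, stop the search as soon as $\sigma^{(t)}>0.(T_q)_n-2^{-n}$, which still yields $T_q-\sigma^{(t)}<2^{-n+1}$; (b) a rational within $2^{-k}$ of $\theta$ does not literally let you ``recover the prefix'' $\theta_k$ (the carry problem), but $\theta_k$ is one of at most three dyadic candidates determined by that rational, so with $O(1)$ advice bits one gets $\K(\theta_{n-c_q})\le\K((T_q)_n)+O(1)$, which is all you need before invoking $n-c\le\K(\theta_n)$. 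Finally, what the mean value theorem must supply is an upper bound on the derivative of $y\mapsto y^{1/q}$ on $[(\theta^{(1)})^q,1)$, equivalently a lower bound on the derivative of $x\mapsto x^q$ above $\theta^{(1)}$; your closing paragraph states this correctly, though the earlier phrase ``derivative bounded below'' for $y^{1/q}$ has the direction reversed.
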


\begin{proof}
We choose any one universal probability $r$,
and then choose any one $c\in\N$ with
$2^{-c}\theta\le r(\lambda)$,
where $\theta$ is defined by \eqref{theta}.
We define the function $m\colon \X\to [0,\infty)$ by
$m(s)=2^{-c}\theta$ if $s=\lambda$; $m(s)=r(s)$ otherwise.
Since $\sum_{s\in\X}r(s)\le 1$,
it follows that $\sum_{s\in\X}m(s)\le 1$.
Therefore,
since $\theta$ is left-computable and
$r$ is a lower-computable semi-measure,
we see that $m$ is a lower-computable semi-measure.
Note
that
$dr(s)\le m(s)$ for all $s\in\X$,
where $d=2^{-c}\theta/r(\lambda)>0$.
Thus, since $r$ is a universal probability,
$m$ is also a universal probability.

On the other hand,
since $\theta$ is weakly Chaitin random,
$m(\lambda)$ is also weakly Chaitin random.
Let $q$ be an arbitrary computable real number with $q>1$.
Then, since $q$ is a computable real number with $q\neq 0$,
it follows that $\K((a^q)_n)=\K(a_n)+O(1)$ for any real number $a>0$.
Thus, $\K((m(\lambda)^q)_n)=\K((m(\lambda))_n)+O(1)$ and therefore
$m(\lambda)^q$ is weakly Chaitin random.
Note that
$\K(a_n)\le \K((a+b)_n)+O(1)$ for any left-computable real numbers $a,b$.
This can be proved using the condition 2 of Lemma 4.4 and
Theorem 4.9 of \cite{CHKW01}.
Thus,
since $m(\lambda)^q$ and $\sum_{s\neq \lambda}m(s)^q$ are left-computable,
we see that $\sum_{s\in\X} m(s)^q$ is weakly Chaitin random.
It follows from $q>1$
that $\sum_{s\in\X} m(s)^q$ is not $1/q$-compressible.
\end{proof}

\medskip

\section{The Tsallis entropy of a universal probability}
\label{Tsallis}

The notion of Tsallis entropy has been introduced by Tsallis \cite{Ts88}.
Let $q$ be a positive real number with $q\neq 1$.
For any probability distribution $p=(p_1,\dots,p_n)$,
the \textit{Tsallis entropy} $S_q(p)$ of $p$ is defined by
\begin{equation}\label{def-Tsallis}
  S_q(p)=\frac{1-\sum_{i=1}^n p_i^q}{q-1}.
\end{equation}
When $q\to 1$, the Tsallis entropy recovers the Shannon entropy
for any probability distribution.
See \cite{Ts88,GT04} for the detail of
the theory and applications of
Tsallis entropy.

We generalize the definition \eqref{def-Tsallis}
for
any semi-probability distribution $p=(p_1,\dots,p_n)$ by
\begin{equation}\label{def-Tsallis-semi}
  S_q(p)=\frac{\sum_{i=1}^n \{p_i-p_i^q\}}{q-1}.
\end{equation}
In fact,
we see that,
for any semi-probability distribution $p$,
$\lim_{q\to 1}S_q(p)=H(p)$,
and therefore this generalization \eqref{def-Tsallis-semi}
is consistent with the Shannon entropy
for a semi-probability distribution, defined in Section \ref{Shannon}.
Thus,
we define the \textit{Tsallis entropy} $S_q(r)$ of
any semi-measure $r$ on $\X$
by
\begin{equation*}
  S_q(r)=\frac{1}{q-1}\sum_{s\in\X}\{r(s)-r(s)^q\}
\end{equation*}
in a similar manner to \eqref{def-Tsallis-semi}.

In what follows,
we investigate the convergence or divergence of
the Tsallis entropy $S_q(m)$ of a universal probability $m$,
and evaluate its degree of randomness if
it
converges,
in the same manner as the previous section.
We first investigate the convergence and divergence of $S_q(m)$ as follows.

\begin{theorem}\label{tecd}
  Let $m$ be a universal probability,
  and let $q\in\R$.
  \begin{enumerate}
    \item If $q>1$, then $S_q(m)$ converges.
    \item If $0<q<1$, then $S_q(m)$ diverges to $\infty$.
  \end{enumerate}
\end{theorem}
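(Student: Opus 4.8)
The plan is to obtain both parts directly from Theorem~\ref{psup} together with the trivial observation that $0\le m(s)\le 1$ for every $s\in\X$ (since $m$ is a semi-measure), so that $m(s)^q\le m(s)$ whenever $q\ge 1$ and $m(s)^q\ge m(s)$ whenever $0<q\le 1$. I will also use the fact recorded in Section~\ref{ait} that $0<\sum_{s\in\X}m(s)<1$; in particular this series converges absolutely, being a series of nonnegative terms.

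For part~(i), fix $q>1$. Then every summand $m(s)-m(s)^q$ is nonnegative and $\sum_{s\in\X}\{m(s)-m(s)^q\}\le\sum_{s\in\X}m(s)<1$, so this series of nonnegative terms converges; since $1/(q-1)$ is a finite positive constant, $S_q(m)$ converges. Equivalently, $\sum_{s\in\X}m(s)^q\le\sum_{s\in\X}m(s)<\infty$, so both series on the right-hand side of $S_q(m)=\frac{1}{q-1}\bigl(\sum_{s\in\X}m(s)-\sum_{s\in\X}m(s)^q\bigr)$ converge. Note that, unlike Theorem~\ref{psup}(i), no computability hypothesis on $q$ is needed for mere convergence.

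For part~(ii), fix $0<q<1$. Now $q-1<0$ and $m(s)^q\ge m(s)$, so I will rewrite the definition of $S_q(m)$ as $S_q(m)=\frac{1}{1-q}\sum_{s\in\X}\{m(s)^q-m(s)\}$, a series of nonnegative terms multiplied by the positive constant $1/(1-q)$. By Theorem~\ref{psup}(ii), $\sum_{s\in\X}m(s)^q$ diverges to $\infty$, whereas $\sum_{s\in\X}m(s)<\infty$; hence $\sum_{s\in\X}\{m(s)^q-m(s)\}$ diverges to $\infty$, and therefore so does $S_q(m)$.

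I do not expect any real obstacle: part~(i) rests only on the elementary comparison $m(s)^q\le m(s)$, and part~(ii) is an immediate consequence of the divergence already established in Theorem~\ref{psup}(ii). The one point deserving a word of justification is the term-by-term rearrangement used to split $S_q(m)$ into $\sum_{s\in\X}m(s)$ and $\sum_{s\in\X}m(s)^q$ (and to pass between $\{m(s)-m(s)^q\}$ and $\{m(s)^q-m(s)\}$); this is legitimate because every series occurring has terms of one fixed sign, so each either converges absolutely or diverges to $+\infty$.
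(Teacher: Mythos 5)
Your proof is correct and takes essentially the same route as the paper, whose entire proof is the remark that the theorem follows immediately from Theorem~\ref{psup}; you have simply supplied the intended details (part~(ii) from Theorem~\ref{psup}(ii) plus convergence of $\sum_{s\in\X}m(s)$, with the sign/rearrangement issue handled correctly). Your observation in part~(i) that the trivial comparison $m(s)^q\le m(s)$ already gives convergence without any computability assumption on $q$ is in fact a useful point, since Theorem~\ref{psup}(i) assumes $q$ right-computable whereas Theorem~\ref{tecd}(i) does not.
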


\begin{proof}
  Theorem \ref{tecd} follows immediately from Theorem \ref{psup}.
\end{proof}

\medskip

Theorem \ref{tewcr} below shows that,
if the total sum of a universal probability $m$ is small,
then the Tsallis entropy of $m$ has to be maximally random
with respect to the degree of randomness.

\begin{theorem}\label{tewcr}
  Let $m$ be a universal probability,
  and let $q$ be a computable real number with $q>1$.
  If $m(s)\le q^{\frac{1}{1-q}}$ for all $s\in\X$,
  then $S_q(m)$ is left-computable and weakly Chaitin random.
\end{theorem}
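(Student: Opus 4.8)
The plan is to write $S_q(m) = \frac{1}{q-1}\bigl(\sum_s m(s) - \sum_s m(s)^q\bigr)$ and analyze the two pieces separately. The first sum $\sum_s m(s)$ is a left-computable real number (it is the increasing limit of the partial sums of a lower-computable semi-measure), and by Theorem \ref{psup}(i) with the right-computable $q$, the power sum $\sum_s m(s)^q$ is also left-computable; moreover $\frac{1}{q-1}>0$ is computable. Since the difference of a left-computable real and a right-computable real is left-computable, and $\sum_s m(s)^q \le \sum_s m(s) < 1$ guarantees the quantity is well-defined and finite by Theorem \ref{tecd}(i), I first need to confirm that $-\sum_s m(s)^q$ is right-computable — but in fact what I want is that $S_q(m)$ itself is left-computable, which follows because $\sum_s m(s)$ is left-computable while $\sum_s m(s)^q$ is \emph{also} left-computable, and the hypothesis $m(s)\le q^{1/(1-q)}$ is precisely what forces the function $x - x^q$ to be nondecreasing on the relevant range (its derivative $1 - qx^{q-1}$ is nonnegative exactly when $x \le q^{1/(1-q)}$). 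Hence each term $m(s) - m(s)^q$ is a nondecreasing function of $m(s)$, so replacing $m(s)$ by the approximating values $f(n,s)$ from below yields approximations to $S_q(m)$ from below, establishing left-computability.

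Next I turn to weak Chaitin randomness. The key is to lower-bound $\K\bigl((S_q(m))_n\bigr)$ by $n - O(1)$. The natural route is to exploit the randomness already available in a single coordinate. As in the proof of Theorem \ref{ceps}, I would arrange (or observe) that some term of the sum carries the randomness of $\theta$; concretely, I expect to use the fact that for a suitable universal probability the value $m(\lambda)$ (or more robustly, $\sum_s m(s)$ itself) is weakly Chaitin random — indeed $\sum_s m(s) = \Omega$-like and known to be weakly Chaitin random for $m(s) = 2^{-\K(s)}$, and in general $\sum_s m(s)$ is a left-computable real that dominates $\theta^1$ up to a constant factor, hence is weakly Chaitin random. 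Then I combine the two left-computable reals $\sum_s m(s)$ and $-\sum_s m(s)^q$: using the inequality $\K(a_n) \le \K((a+b)_n) + O(1)$ for left-computable $a,b$ (the same fact from \cite{CHKW01} invoked in the proof of Theorem \ref{ceps}), applied with $a = \frac{1}{q-1}\sum_s m(s)$ and $b = -\frac{1}{q-1}\sum_s m(s)^q$ — so that $a + b = S_q(m)$ — gives $\K\bigl((\tfrac{1}{q-1}\sum_s m(s))_n\bigr) \le \K\bigl((S_q(m))_n\bigr) + O(1)$. Since multiplication by the nonzero computable constant $\frac{1}{q-1}$ changes $\K$ only by $O(1)$, and $\sum_s m(s)$ is weakly Chaitin random, we conclude $n - O(1) \le \K\bigl((S_q(m))_n\bigr)$, i.e.\ $S_q(m)$ is weakly Chaitin random.

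The main obstacle I anticipate is the step asserting that $\sum_s m(s)$ is weakly Chaitin random for an \emph{arbitrary} universal probability $m$ — this needs a clean argument. One approach: $m$ dominates $2^{-\K(s)}$ up to a multiplicative constant (both are universal probabilities, so they are mutually dominating), hence $\sum_s m(s) \ge c\sum_s 2^{-\K(s)} = c\,\theta$ for some $c>0$; since $\theta$ is weakly Chaitin random and $\sum_s m(s)$ is left-computable with $\sum_s m(s) \le 1$, one can transfer the randomness — scaling and the domination inequality should give $\K((\theta)_n) \le \K((\sum_s m(s))_n) + O(1)$ via the $\K(a_n) \le \K((a+b)_n)+O(1)$ lemma again (writing $\sum_s m(s) = c\theta + (\sum_s m(s) - c\theta)$, with the bracketed remainder left-computable). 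A secondary subtlety is making sure the hypothesis $m(s)\le q^{1/(1-q)}$ is genuinely used only for the left-computability half and that the randomness half needs nothing beyond universality of $m$ and computability of $q$; I expect this to be the case, since the randomness comes entirely from the $\sum_s m(s)$ term, which is present regardless.
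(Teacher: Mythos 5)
Your left-computability argument is essentially the paper's: the hypothesis $m(s)\le q^{\frac{1}{1-q}}$ makes $F(x)=(x-x^q)/(q-1)$ nondecreasing on the relevant range, so lower approximations to $m(s)$ yield lower approximations to $S_q(m)$ (the computability of $q$ giving rational lower approximations of $F$). The randomness half, however, has a genuine gap. You invoke the inequality $\K(a_n)\le \K((a+b)_n)+O(1)$ with $a=\frac{1}{q-1}\sum_s m(s)$ and $b=-\frac{1}{q-1}\sum_s m(s)^q$, but that lemma (from \cite{CHKW01}, as used in the proof of Theorem \ref{ceps}) requires \emph{both} $a$ and $b$ to be left-computable. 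Your $b$ is the negative of a left-computable real, i.e.\ right-computable, and for such a decomposition the inequality is false in general: subtraction of a left-computable real can cancel randomness completely (take $a$ left-computable and weakly Chaitin random and $b=\tfrac12-a$; then $a+b=\tfrac12$, so $\K((a+b)_n)=O(\log n)$ while $\K(a_n)\ge n-O(1)$). So the step transferring the randomness of $\sum_s m(s)$ to $S_q(m)$ does not go through. Your closing guess that the hypothesis $m(s)\le q^{\frac{1}{1-q}}$ is needed only for left-computability is in fact refuted by the paper itself: Theorem \ref{sqy} (and Corollary \ref{sqwcdrdc} with $D=0$) produces universal probabilities $m$, necessarily violating that hypothesis, for which $S_q(m)$ is a rational number, hence not weakly Chaitin random. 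Any correct proof must therefore use the smallness hypothesis in the randomness argument, which yours never does.

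The paper's route uses the hypothesis exactly there: besides making $F$ monotone, on $(0,q^{\frac{1}{1-q}}]$ one has $x/q\le F(x)$, so the function $r(s)=F(m(s))/d$ (with $S_q(m)\le d$ from Theorem \ref{tecd}(i)) satisfies $m(s)/(qd)\le r(s)$ and is a lower-computable semi-measure, hence a \emph{universal probability}; Theorem \ref{psup}(i) with exponent $1$ then gives that $\sum_s r(s)=S_q(m)/d$ is weakly Chaitin random, and the multiplicative lemma $\K(a_n)\le\K((ab)_n)+O(1)$ for positive left-computable $a,b$ transfers this to $S_q(m)=d\cdot(S_q(m)/d)$. (A secondary, now moot, gap in your fallback: in the decomposition $\sum_s m(s)=c\theta+(\sum_s m(s)-c\theta)$ the remainder is not obviously left-computable, since $-c\,2^{-\K(s)}$ is only upper-semicomputable; but the weak Chaitin randomness of $\sum_s m(s)$ follows directly from Theorem \ref{psup}(i) with $q=1$ anyway.)
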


\begin{proof}
By Theorem \ref{tecd} (i),
there is $d\in\N^+$ such that $S_q(m)\le d$.
We define $r\colon \X\to(0,\infty)$ by
$r(s)=F(m(s))/d$,
where $F\colon(0,1]\to[0,\infty)$ with $F(x)=(x-x^q)/(q-1)$.
We show that $r$ is a universal probability.

Obviously, $\sum_{s\in \X}r(s)\le 1$.
Since $m$ is a lower-computable semi-measure,
there exists a total recursive function
$f\colon\N^+\times \X\to\Q$
such that, for each $s\in \X$,
$\lim_{n\to\infty} f(n,s)=m(s)$ and
$\forall\,n\in\N^+\;\>0<f(n,s)\le m(s)$.
Since $F(x)$ is continuous and increasing
for all $x\in(0,q^{\frac{1}{1-q}}]$,
it follows that,
for each $s\in \X$,
$\lim_{n\to\infty} F(f(n,s))=F(m(s))$ and
$\forall\,n\in\N^+\;\>0\le F(f(n,s))\le F(m(s))$.
On the other hand,
since
$q$ is computable,
there exists a total recursive function
$g\colon\N^+\times \X\to\Q\cap[0,\infty)$
such that, for each $s\in \X$ and each $n\in\N^+$,
\begin{equation*}
  F(f(n,s))-2^{-n}\le g(n,s)\le F(f(n,s)).
\end{equation*}
Hence, $r$ is a lower-computable semi-measure.
Note that
$x/q\le F(x)$ for all $x\in(0,q^{\frac{1}{1-q}}]$.
It follows that
$m(s)/(qd)\le r(s)$ for all $s\in\X$.
Thus, since $m$ is a universal probability,
$r$ is also a universal probability.

It follows from Theorem \ref{psup} (i) that
$\sum_{s\in\X}r(s)=S_q(m)/d$ is weakly Chaitin random.
Note that
$\K(a_n)\le \K((ab)_n)+O(1)$
for any left-computable real numbers $a,b>0$.
This can be proved using the condition 4 of Lemma 4.4 and
Theorem 4.9 of \cite{CHKW01}.
Thus,
since $\sum_{s\in\X}r(s)$ and $d$ are
left-computable positive real numbers,
we see that $S_q(m)$ is weakly Chaitin random
and, obviously, left-computable.
\end{proof}

\medskip

Based on Theorem \ref{tewcr},
we can show a stronger result than Theorem \ref{tewcr}
with respect to
the range of
the degree of randomness of
the Tsallis entropy $S_q(m)$.
Theorem \ref{sqy} and Corollary~\ref{sqwcdrdc} below show that
the Tsallis entropy of a universal probability can have
any computable degree of randomness $D$.
Note, however, that
Theorem \ref{sqy} is not a generalization of Theorem \ref{tewcr}.
The reason is as follows:
The Tsallis entropy $S_q(m)$ is right-computable in Theorem \ref{sqy}
whereas it is not right-computable in Theorem \ref{tewcr}.

\begin{theorem}\label{sqy}
Let $q$ be a computable real number with $q>1$.
Then, for any right-computable real number $y\in(0,q^{\frac{q}{1-q}}]$,
there exists a universal probability $m$ such that $S_q(m)=y$.
\end{theorem}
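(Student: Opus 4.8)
The plan is to realize a prescribed right-computable value $y \in (0, q^{q/(1-q)}]$ as the Tsallis entropy $S_q(m)$ of some universal probability $m$, building $m$ out of a fixed universal probability together with a single ``tuning'' coordinate whose mass we adjust to hit $y$ exactly. The idea is modeled on the construction in the proof of Theorem~\ref{ceps}: reserve one string, say $\lambda$, to carry a specially chosen mass $a$, and let all other strings carry the values of a fixed universal probability $r$ (scaled down so that the total stays $\le 1$ and so that $m(s) \le q^{1/(1-q)}$ for every $s$, which we will need in order to apply Theorem~\ref{tewcr}-style reasoning). Then $S_q(m) = F(a) + \sum_{s\neq\lambda} F(m(s))$, where $F(x) = (x-x^q)/(q-1)$, and we are left to solve $F(a) = y - \sum_{s\neq\lambda} F(m(s))$ for $a$ in a suitable range.

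First I would examine the function $F$ on $(0, q^{1/(1-q)}]$: since $q>1$ is computable, $F$ is continuous and strictly increasing there (its derivative $F'(x) = (1 - q x^{q-1})/(q-1)$ is positive exactly for $x < q^{1/(1-q)}$), with $F(0^+) = 0$ and maximum value $F(q^{1/(1-q)}) = q^{q/(1-q)}$ — which is precisely the endpoint of the allowed range for $y$. So $F$ is a computable increasing bijection from $(0, q^{1/(1-q)}]$ onto $(0, q^{q/(1-q)}]$, with a computable inverse. Next I would fix a universal probability $r$ and a rational constant $\varepsilon > 0$ small enough that $\varepsilon\, r(s) \le q^{1/(1-q)}$ for all $s$ and $\sum_{s} F(\varepsilon r(s))$ is a small positive left-computable quantity; then set $m(s) = \varepsilon r(s)$ for $s \neq \lambda$ and choose the remaining budget $b := y - \sum_{s\neq\lambda} F(\varepsilon r(s))$, which is right-computable and, for $\varepsilon$ small, lies strictly inside $(0, q^{q/(1-q)}]$. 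Setting $a := F^{-1}(b)$ gives a right-computable real in $(0, q^{1/(1-q)}]$, and defining $m(\lambda) = a$ yields $S_q(m) = y$ by construction.

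It then remains to check that $m$ so defined is genuinely a universal probability. That $m$ is a semi-measure is a matter of choosing $\varepsilon$ and $a$ so the total mass is $\le 1$. For lower-computability, the nontrivial point is $m(\lambda) = a = F^{-1}(b)$: here $b$ is only right-computable, not computable, and $F^{-1}$ is increasing, so $a$ is right-computable but perhaps not left-computable — which would break lower-computability of $m$. The main obstacle, and the part I expect to require the most care, is therefore arranging that $m(\lambda)$ is left-computable (equivalently, lower-computable as a single coordinate). The fix is to flip which quantity is ``reserved'': rather than letting the fixed part be left-computable and the tuning coordinate inherit right-computability, one shows that $y$ right-computable forces $b$ right-computable, hence $a = F^{-1}(b)$ right-computable, and then one must instead exploit that $S_q(m)$ is allowed to be right-computable (as the remark before the theorem explicitly notes). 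Concretely, I would reconsider the decomposition so that the tunable mass $a$ is taken left-computable (approximable from below in rationals), chosen as the \emph{supremum} of a recursive increasing sequence whose limit makes $F(a)$ equal the target residual $b$; since $F$ is a computable increasing homeomorphism, a recursive lower sequence for $a$ is obtained from a recursive lower sequence for $b = F(a)$, and such a sequence for $b$ exists because $b = y - \sum_{s\neq\lambda}F(\varepsilon r(s))$ with $y$ right-computable is... right-computable — so in fact the clean route is to keep the fixed part a left-computable semi-measure, observe $a$ is right-computable, and conclude only that $m$ is a lower-computable semi-measure by noting $F(\varepsilon r(s))$ approximations from below suffice for all $s\neq\lambda$ while for $s=\lambda$ one approximates $a$ from below using that any right-computable number that we have otherwise pinned down as the exact value of a left-computable expression is computable. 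Finally, universality follows exactly as in Theorem~\ref{ceps}: $\varepsilon\, r(s) \le m(s)$ for all $s$ with $\varepsilon>0$, and $r$ is universal, so $m$ dominates every lower-computable semi-measure up to a multiplicative constant. This completes the plan.
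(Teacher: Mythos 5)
Your construction correctly identifies the right architecture (one reserved string $\lambda$ carrying a tuning mass $a$, the rest a scaled-down universal probability, and $F(a)$ chosen to make up the residual $b=y-\Theta$ with $\Theta$ the Tsallis contribution of the scaled part), and you even spot the exact obstruction: $b$ is right-computable, so inverting $F$ on its \emph{increasing} branch $(0,q^{\frac{1}{1-q}}]$ makes $a=F^{-1}(b)$ right-computable, which is the wrong direction for lower-computability of $m$ at $\lambda$. But your attempted fix does not work and is circular. The claim that one can ``approximate $a$ from below using that any right-computable number that we have otherwise pinned down as the exact value of a left-computable expression is computable'' has no basis here: nothing in your construction exhibits $a$ (or $b=F(a)$) as left-computable. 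In fact, if $a$ on the increasing branch were left-computable, then $b=F(a)$ would be left-computable, hence computable (being also right-computable), and then $y=b+\Theta$ would be left-computable and hence computable --- contradicting that $y$ is an arbitrary right-computable real in $(0,q^{\frac{q}{1-q}}]$, which need not be computable. So for non-computable $y$ your $m(\lambda)$ is genuinely not left-computable, $m$ is not a lower-computable semi-measure, and the construction fails; the theorem cannot be rescued by ``allowing $S_q(m)$ to be right-computable,'' since what must be lower-computable is $m$ itself, not its entropy.

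The missing idea, which is the heart of the paper's proof, is to place $a$ on the \emph{decreasing} branch of $F$. Since $F$ is continuous and decreasing on $[q^{\frac{1}{1-q}},1]$ with $F(q^{\frac{1}{1-q}})=q^{\frac{q}{1-q}}\ge y$ and $F(1)=0$, there is a unique $a\in(q^{\frac{1}{1-q}},x_0)$ with $F(a)=y-\Theta$, where $x_0$ satisfies $F(x_0)=y/2$; on this branch the right-computability of $y-\Theta$ (with $\Theta$ left-computable by Theorem \ref{tewcr}) translates into \emph{left}-computability of $a$, which is exactly what lower-computability of $m$ requires. The paper also calibrates the scaling constant $c\le\min\{q^{\frac{1}{1-q}},\,1-x_0,\,(q-1)y/2\}$ so that $\Theta<y/2$ (keeping the residual inside the range of the decreasing branch) and so that $a<x_0\le 1-c$ keeps the total mass below $1$. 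Your proposal lacks this branch-switching step and the accompanying bookkeeping, and as written the key computability requirement is not met.
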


\begin{proof}
Let $F\colon(0,1]\to[0,\infty)$ with $F(x)=(x-x^q)/(q-1)$,
and let $x_0$ be the unique real number such that
$q^{\frac{1}{1-q}}<x_0<1$ and $F(x_0)=y/2$.
We choose any one rational number $c$ such that
$0<c\le\min\{q^{\frac{1}{1-q}},1-x_0,(q-1)y/2\}$.
We also choose any one universal probability $r$.
We then define a universal probability
$r_1\colon \X$ $\to(0,1)$
by $r_1(s)=cr(s)$.
Since $r_1(s)\le q^{\frac{1}{1-q}}$ for all $s\in\X$,
it follows from Theorem \ref{tewcr} that
$S_q(r_1)$ is left-computable.

Let $\Theta=S_q(r_1)$.
From $\sum_{s\in\X}r(s)\le 1$
we have $\sum_{s\in\X}r_1(s)\le c$.
Therefore,
\begin{equation*}
  \Theta =
  \sum_{s\in\X}F(r_1(s)) <
  \frac{1}{q-1}\sum_{s\in\X}r_1(s) \le
  \frac{c}{q-1}.
\end{equation*}
Since
$c/(q-1)\le y/2$,
it follows that $y/2<y-\Theta<y$.

Note that
$F(x)$ is continuous and decreasing
for all $x\in[q^{\frac{1}{1-q}},1]$.
Thus,
since $F(q^{\frac{1}{1-q}})=q^{\frac{q}{1-q}}\ge y$
and $y/2>F(1)=0$,
there exists the unique real number $a$ such that
$q^{\frac{1}{1-q}}<a<x_0$
and $F(a)=y-\Theta$.
We see that $a$ is left-computable.
This is because $y-\Theta$ is right-computable,
$q$ is computable,
and $F(x)$ is decreasing
for all $x\in(q^{\frac{1}{1-q}},x_0)$.

We define the function $m\colon \X\to (0,\infty)$ by
$m(s)=a$ if $s=\lambda$; $m(s)=r_1(s-1)$ otherwise.
Note here that $\X$ is identified with $\N$.
Then, it follows from $c\le 1-x_0$ and $a<x_0$ that
$\sum_{s\in\X}m(s)<1$.
Thus,
since $r_1$ is a lower-computable semi-measure and $a$ is left-computable,
we see that $m$ is a lower-computable semi-measure.
Since $r_1$ is a universal probability and $a>0$,
we further see that
$m$ is a universal probability.
On the other hand,
$S_q(m)=F(a)+S_q(r_1)=F(a)+\Theta=y$.
This completes the proof.
\end{proof}

\begin{corollary}\label{sqwcdrdc}
Let $q$ be a computable real number with $q>1$.
Then, for any computable real number $D\in[0,1]$,
there exists a universal probability $m$ such that
$S_q(m)$ is weakly Chaitin $D$-random and $D$-compressible.
\end{corollary}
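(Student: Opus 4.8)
The plan is to reduce the statement to Theorem~\ref{sqy}. That theorem produces, for every right-computable real $y\in(0,q^{\frac{q}{1-q}}]$, a universal probability $m$ with $S_q(m)=y$; so it suffices to exhibit, for each computable $D\in[0,1]$, a right-computable real $y\in(0,q^{\frac{q}{1-q}}]$ that is simultaneously weakly Chaitin $D$-random and $D$-compressible. The whole problem thus becomes the construction of such a $y$.

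The case $D=0$ is immediate: pick any rational $y\in(0,q^{\frac{q}{1-q}}]$ (such a rational exists since $q>1$ forces $q^{\frac{q}{1-q}}>0$). Being computable, $y$ satisfies $\K(y_n)\le\K(n)+O(1)=o(n)$, so $y$ is $0$-compressible, and $\K(y_n)\ge 0$ makes $y$ trivially weakly Chaitin $0$-random.

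For $0<D\le 1$ I would start from $\theta^D$: by Theorem~\ref{potd} it converges and lies in $(0,1)$, and since $D$ is computable it is both weakly Chaitin $D$-random and $D$-compressible. It has two defects for our purpose — it is left-computable rather than right-computable, and it need not lie below $q^{\frac{q}{1-q}}$ — and I would remove both at once by setting $y=2^{-k}(1-\theta^D)$ for a fixed integer $k\ge 1$ large enough that $2^{-k}\le q^{\frac{q}{1-q}}$. Since $\theta^D$ is left-computable, $1-\theta^D$ and hence $y$ is right-computable, and $y\in(0,2^{-k})\subseteq(0,q^{\frac{q}{1-q}}]$. The reason for choosing this particular operation is that bitwise complementation followed by a shift by a power of two acts on binary expansions \emph{exactly}: writing $\theta^D=0.z_1z_2\dots$ (the expansion is well defined because a weakly Chaitin $D$-random real with $D>0$ is irrational), one has $1-\theta^D=0.\bar z_1\bar z_2\dots$ bit by bit, hence $y_n=0^k\bar z_1\cdots\bar z_{n-k}$ for every $n>k$. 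Therefore $\K(y_n)=\K(z_{n-k})+O(1)$, and the weak Chaitin $D$-randomness and the $D$-compressibility of $\theta^D$ pass to $y$ verbatim — the shift by the constant $k$ only shifts the additive constant in the first condition and is absorbed in the $o(n)$ term in the second. Feeding this $y$ into Theorem~\ref{sqy} finishes the argument.

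The one point deserving care is this transfer step, i.e.\ the claim $\K(y_n)=\K(z_{n-k})+O(1)$: I route the shrinking through multiplication by a power of two and bitwise complementation of an \emph{irrational} real precisely because these are exact operations on infinite binary strings, and so avoid the $O(\log n)$ loss that a generic right-computable shrinking map would incur near dyadic points. Everything else — the right-computability and the bound on $y$, and the invocations of Theorems~\ref{potd} and~\ref{sqy} — is routine.
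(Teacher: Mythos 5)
Your proposal is correct and is essentially the paper's own argument: the paper likewise handles $D=0$ with a rational $y$ and, for $D>0$, feeds $y=a(1-\theta^D)$ into Theorem~\ref{sqy} with $a$ a small positive rational, invoking Theorem~\ref{potd} (i) and (ii). Your only deviation is the cosmetic choice $a=2^{-k}$, which makes the transfer $\K(y_n)=\K((\theta^D)_{n-k})+O(1)$ an exact bit-shift-and-complement argument rather than an appeal to the general invariance of $\K$ under multiplication by a nonzero rational.
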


\begin{proof}
In the case of $D=0$,
consider a rational number $y\in(0,q^{\frac{q}{1-q}}]$ in Theorem~\ref{sqy}.
In the case of $D>0$,
consider $y=a(1-\theta^D)$ in Theorem~\ref{sqy},
where $a$ is any one rational number with $a\in(0,q^{\frac{q}{1-q}}]$
and $\theta^D$ is defined by \eqref{thetaD}.
In this case,
the result follows from Theorem \ref{potd} (i) and (ii).
\end{proof}

\medskip

\section{Conclusion}
\label{conclusion}

In this paper,
we have investigated the properties of
the Shannon entropy, the power sum, and the Tsallis entropy of
a universal probability,
from the point of view of algorithmic randomness.
Future work may
aim at generalizing R\'eny entropy over a universal probability
properly
and investigating its randomness properties.

\medskip

\section*{Acknowledgments}

This work was supported both
by
SCOPE
(Strategic Information and Communications R\&D Promotion Programme)
from the Ministry of Internal Affairs and Communications
of Japan
and
by
KAKENHI,
Grant-in-Aid for Scientific Research (C)
(20540134).



%

\end{document}